\providecommand{\algorithmname}{Algorithm}
\theoremstyle{plain}
\newtheorem{thm}{\protect\theoremname}
  \theoremstyle{definition}
  \newtheorem{defn}[thm]{\protect\definitionname}
\definecolor{mygreen}{rgb}{0,0.6,0}
\definecolor{mygray}{rgb}{0.5,0.5,0.5}
\definecolor{mymauve}{rgb}{0.58,0,0.82}
\tiny\color{mygray}, % the style that is used for the line-numbers
\renewcommand{\epsilon}{\varepsilon}
  \providecommand{\definitionname}{Definition}
\providecommand{\theoremname}{Theorem}
\begin{document}
\global\long\def\i{{\rm i}}
\global\long\def\e{\mathrm{e}}

\title{A splitting approach for the Kadomtsev--Petviashvili equation\tnoteref{label1}} \tnotetext[label1]{This work is  supported by the Austrian Science Fund (FWF) -- project id: P25346.}
\author[uibk]{Lukas Einkemmer\corref{cor1}} \ead{lukas.einkemmer@uibk.ac.at}
\author[uibk]{Alexander Ostermann}
\address[uibk]{Department of Mathematics, University of Innsbruck, Austria}
\cortext[cor1]{Corresponding author}
\begin{abstract}
	We consider a splitting approach for the  Kadomtsev--Petviashvili equation with periodic boundary conditions and show that the necessary interpolation procedure can be efficiently implemented. The error made by this numerical scheme is compared to exponential integrators which have been shown in Klein and Roidot (SIAM J. Sci. Comput., 2011) to perform best for stiff solutions of the Kadomtsev--Petviashvili equation. \textcolor{black}{Since many classic high order splitting methods do not perform well, we propose a stable extrapolation method in order to construct an efficient numerical scheme of order four. In addition, the conservation properties and the possibility of order reduction for certain initial values for the numerical schemes under consideration is investigated.}
\end{abstract} 
\maketitle

\section{Introduction \label{sec:intro}}

The Kadomtsev--Petviashvili equation (KP equation) is a model of nonlinear
wave propagation \textcolor{black}{which was proposed in \cite{kadomtsev1970}}; it is usually stated in the following form
\begin{equation}
\left(u_{t}+6uu_{x}+\epsilon^{2}u_{xxx}\right)_{x}+\lambda u_{yy}=0,\label{eq:KPequation}
\end{equation}
where $\lambda$ and $\epsilon$ are two parameters that are determined
by the physical problem under consideration. The KP equation appears
in the description of long wavelength waves, where we choose either
$\lambda=1$ (weak surface tension) or $\lambda=-1$ (strong surface
tension).
In accordance with the literature (see, for example, \citep{klein2011})
we call the latter the KP I model and the former the KP II model.

\textcolor{black}{The KP equation is a nonlinear dispersive partial differential equation that can be considered as a two-dimensional generalization of the well known Korte\-weg--de Vries equation (KdV equation). Similar to the KdV equation the KP equation is Hamiltonian and as a consequence does not include any dissipation. It exhibits many interesting physical phenomena  such as soliton solutions and blow-up in finite time (see, for example, \cite{klein2012}).}

Before a numerical scheme is applied equation (\ref{eq:KPequation})
is usually rewritten in evolution form
\begin{equation}
u_{t}+6uu_{x}+\epsilon^{2}u_{xxx}+\lambda\partial_{x}^{-1}u_{yy}=0,\label{eq:KP-evolution}
\end{equation}
where $\partial_{x}^{-1}$ is to be understood as the regularized
Fourier multiplier of $-\i/k_{x}$. That is, as in \citep{klein2011},
we impose periodic boundary conditions and use the Fourier multiplier
\[
\frac{-\i}{k_{x}+\i\lambda\delta},
\]
where $\delta$ is equal to machine epsilon (the smallest number that
in the finite precision arithmetic system under consideration yields
a result different from one when added to one). That is, for the double
precision floating point numbers employed in the simulations presented
here, we have $\delta=2^{-52}$.

The KP equation shows a number of interesting phenomena including
soliton solutions and the appearance of small scale oscillations.
For soliton solutions the stiffness of the KP equation is usually
only a minor concern (since the linear part of the equation can be solved by spectral methods). In this setting various types of IMEX methods
are usually very efficient. However, a number of phenomena do
display stiff behavior and therefore pose a significant challenge
for numerical schemes. In \citep{klein2011} it was found that in
this context exponential integrators, in many instances, outperform
IMEX and implicit Runge--Kutta methods. Furthermore, explicit time
integrators \textcolor{black}{suffer from a severe stability restriction of the time step size} (due to the third derivative that appears in the dispersive
term) that renders them computationally unfeasible.

In this paper we will demonstrate that splitting methods provide a
viable and computationally attractive alternative to exponential integrators
for stiff solutions of the KP equation. In section \ref{sec:Splitting-approach}
we introduce the Strang splitting approach and an exponential integrator
of order two. In section \ref{sec:Performance-considerations} the
performance of the Strang splitting scheme is compared to that of
the exponential integrator. The conservation properties of the splitting
approach are investigated in section \ref{sec:Conservation-properties}.
In section \ref{sec:High-order-splittings}
we consider both traditional high-order splitting schemes as well as a computationally attractive alternative approach based on Richardson extrapolation. Let us duly note that this approach avoids the stability problems often present if local Richardson extrapolation is applied to a nonlinear
problem. \textcolor{black}{We then provide, in section \ref{sec:pc}, a comparison of the run time between the second and fourth order methods for a given accuracy.} \textcolor{black}{In section \ref{sec:constraint} we consider order reduction that is observed for initial values which violate a constraint. In fact, we observe that the commonly observed order reduction is not present for the Strang splitting scheme.} Finally, we conclude in section \ref{sec:conclusion}.

%\FloatBarrier
\section{Numerical approach\label{sec:Splitting-approach}}

In this paper we will exclusively employ the setting described in
\citep{klein2011}. That is, the KP equation for a given initial value
and periodic boundary conditions is propagated in time. Within this
framework we are limited to initial values that are either periodic
or decrease sufficiently fast for large values of $\vert x\vert$
and $\vert y\vert$. 

In this setting, the form of equation (\ref{eq:KP-evolution}) suggests
an approach where the linear part can be solved very efficiently by
means of fast Fourier techniques. This eliminates the (severe) \textcolor{black}{stability
constraint} imposed by both the third and second order differential operators present
in the KP equation. Exponential integrators (see, for example, \citep{hochbruck2010})
exploit the fact that the linear part can be efficiently diagonalized.
Similar to Runge--Kutta methods, time integrators of arbitrary order
can be constructed where it is only required that the Burgers' nonlinearity
can be evaluated efficiently. However, while this method manages to
overcome a number of difficulties inherent in the numerical integration
of the KP equation, it also suffers from a number of disadvantages
due to the fact that the discretization of the Burgers' nonlinearity
is essentially explicit.
In fact, there is no mathematical proof that shows that exponential
integrators are stable for the KP equation. Note that such results
have been established for unbounded nonlinearities (using the parabolic
smoothing property) and for bounded nonlinearities (see, for example,
\citep{hochbruck2010}). From a numerical standpoint, such considerations
are important if the nonlinear dynamics is equally important as the
linear dynamics (for example, if $u$ is large in magnitude). In addition,
it is often not clear how exponential integrators behave with respect
to the conservation of invariants of the continuous system. For many
interesting problems splitting methods solve both of these problems.
In fact, stability (and convergence) of Strang splitting for a number
of dispersive equations with a Burgers' nonlinearity are available
in \citep{holden2013}.

In the splitting approach considered here, we compute an approximate
solution to the partial flows given by
\begin{equation}
u_{t}=Au=-\epsilon^{2}u_{xxx}-\lambda\partial_{x}^{-1}u_{yy}\label{eq:linear-flow}
\end{equation}
and
\begin{equation}
u_{t}=B(u)=-6uu_{x}.\label{eq:nonlinear-flow}
\end{equation}
If it is possible to efficiently compute sufficiently accurate approximations
to the partial flows given by equation (\ref{eq:linear-flow}) and
(\ref{eq:nonlinear-flow}), respectively, splitting methods constitute a viable approach. For example, the Strang splitting scheme
for the step size $\tau$ is given by
\[
u_{n+1}=\e^{\frac{\tau}{2}A}\left(\varphi_{\tau}^{B}\left(\e^{\frac{\tau}{2}A}u_{n}\right)\right),
\]
where (for a given initial value $u_{n}$) the linear partial flow
corresponding to equation (\ref{eq:linear-flow}) is denoted by $\mathrm{e}^{\tau A}u_{n}$
and the nonlinear partial flow corresponding to equation (\ref{eq:nonlinear-flow})
is denoted by $\varphi_{\tau}^{B}(u_{n})$. Let us also note that
since the linear half-steps can be combined, the Strang splitting
scheme does only need to compute the action of each partial flow once
during each time step.

\textcolor{black}{Before proceeding, let us note that the partial flow given in \eqref{eq:nonlinear-flow} is not well defined for arbitrarily large time steps. This is due to the fact that Burgers' equation develops a singularity for finite times. In principle this implies a step size restriction for the splitting approach. In all the simulations conducted in this paper the step size is determined by accuracy considerations only.  However, this would be a more serious concern in the small dispersion limit (i.e.~where $\epsilon\to0$). We will not consider this case here but remark that due to the large gradients in the solution, a different numerical procedure for both time and space discretization seems to be in order then.}

In the next section we will compare \textcolor{black}{the splitting} approach \textcolor{black}{outlined above} to the exponential
integrator of order two given by
\begin{align}
u_{n+1} & =\e^{\tau A}u_{n}+\tau\varphi_{1}(\tau A)B(u_{n})+\tau\varphi_{2}(\tau A)\left(B(U)-B(u_{n})\right),\label{eq:expint-order2}
\end{align}
where
\[
U=\e^{\tau A}u_{n}+\tau\varphi_{1}(\tau A)B(u_{n})
\]
and the $\varphi_{i}$ functions are given by the recurrence relation
\[
z\varphi_{k+1}(z)=\varphi_{k}(z)-\varphi_{k}(0)
\]
with initial value $\varphi_{0}(z)=\e^{z}$.

%\FloatBarrier
\section{Performance considerations\label{sec:Performance-considerations}}

In \citep{klein2011} it has been argued that the splitting approach
is not viable as the interpolation necessary to solve Burgers' equation
(i.e., to compute an approximation to the action of $\varphi_{\tau}^{B}$)
is too costly compared to the computation of $uu_{x}$ which only
requires two Fast Fourier Transforms (FFTs) as well as some complex
arithmetics. Therefore, we will consider this point in more detail
in this section.

The algorithm of Cooley and Tukey requires approximately $5n\log n$
floating point operations. Libraries, such as the Fastest Fourier
Transform in the West (FFTW \citep{FFTW}) used in our implementation,
provide very efficient implementations of the FFT (including optimizations
using SSE and AVX%
\footnote{The Streaming Single instruction, multiple data Extension (SSE) \textcolor{black}{and the Advanced Vector extensions (AVX) are}
a collection of CPU instructions that can be utilized to accelerate
code segments that applies the same operation to multiple floating
point numbers.%
} and the use of more advanced algorithms). On the other hand, using
the method of characteristics, we can derive an expression for the
exact solution of equation (\ref{eq:nonlinear-flow})
\begin{equation}
u_{1}(x)=u\left(0,x-6\tau u_{1}(x)\right),\label{eq:burgers-exact-representation}
\end{equation}
where $u_{1}(x)$ is the solution of (\ref{eq:nonlinear-flow}) at
time $\tau$ with initial value $u(0,\cdot)$. Note that for the KP
equation $y$ is a parameter in the above equation (that is, we have
to compute an approximation to $u_{1}(x)$ for each grid point in
the $y$-direction). The representation given here is still implicit
in $u_{1}$ and can be solved by conducting a fixed-point iteration.
In a practical numerical scheme, this fixed-point iteration has to
be truncated after a finite number of iterations (henceforth denoted
by $i$). Of course, the value of $i$ has a substantial impact on
the performance. The other ingredient necessary is an interpolation
algorithm. Such an algorithm is required as we have to determine the
value of $u(0,x_{i}-6\tau u_{1}(x_{i}))$, for each grid point $x_{i}$.
Let us further note that using the FFT algorithm is not possible in
this case as the translation does depend on $x_{i}$ itself; this
fact implies that the resulting points are no longer equidistant.
However, similar to semi-Lagrangian methods (see, for example, \citep{sonnendrucker1999})
we can use a (local) polynomial or a spline interpolation of sufficiently
high degree. 

Let us now consider the efficiency of constructing and evaluating
a spline approximation. Construction of a cubic spline requires $\mathcal{O}(n)$
(real) floating point operations (the cost of the tridiagonal matrix
solver). The resulting polynomial is accurate of order four. To evaluate
a polynomial then requires $4n$ floating point operations (where
we count one addition and one multiplication as one operation). Thus,
one would conclude that even for medium sized problems the floating
point operations count favors the spline interpolation. However, once
we consider an implementation in C++ the performance of this scheme
is somewhat disappointing. For example, using the GNU scientific library%
\footnote{on an Intel Core i5-3427 CPU and a problem of size $2^{11}\cdot2^{9}$.%
} (GSL \citep{GSL}) we need approximately $200$ ms to construct the
spline and $150$ ms for each fixed-point iteration. 

On the other
hand performing two FFTs (as is required to compute the Burgers' nonlinearity)
requires only $120$ ms. The second order exponential integrator,
in total, requires the evaluation of two nonlinearities and an additional
$6$ FFTs for the computation of the matrix functions, yielding a
total cost of approximately $600$ ms per time step, whereas the Strang
splitting algorithm requires approximately $320+i\cdot150$ ms per
time step. Thus the Strang splitting scheme has approximately equal
cost if we choose $i=2$ (a value that is presumably too small). We
have also used the ALGLIB library and found its performance significantly
worse than GSL. \textcolor{black}{Note that the GSL library also offers an
interpolation that constructs and evaluates the interpolation polynomial for
a number of specified grid points. We have used this to perform interpolation
with a stencil of four grid points that is centered at the evaluation point.
However, using this}
polynomial approximation approach
from GSL does not significantly improve performance either; even though
in this case we do not have to construct a global spline.

The issue here is not only one of optimization (GSL is most certainly
not as well tuned as FFTW is) but in fact does relate to the problem
that is being solved. The FFT algorithm must assume that it operates
on an equidistant grid. This is not true for a spline or polynomial
interpolation. In fact, all libraries require both an array of the
grid points and the function values. Also, the FFT expansion is global
thus alleviating the need for (a possible expensive) modulo operation
in order to determine which part of the approximation needs to be
accessed. Furthermore, GSL does not know a priori that we only employ
fixed degree polynomials. Thus, it has to implement an algorithm that
is stable even if high degree polynomials need to be constructed on
a highly irregular grid. None of the difficulties listed above (except
for performing a modulo operation) are relevant here. Based on the
Lagrange form, we have implemented a cubic approximation (i.e., an
approximation of order four) that only requires $60\cdot i$ ms. 
\textcolor{black}{Note that for this implementation the cost of the cubic
interpolation is equal to the cost of performing a single FFT.}
This then means that $i=8$ would yield a Strang splitting scheme that
is equal in execution time to the exponential integrator of order
two. The details of this implementation are given in \textcolor{black}{\ref{sec:Efficient-cubic}}.

Now, at least two questions remain to be answered: what value of $i$
is required in order to obtain a sufficiently accurate approximation
and how does the error of the Strang splitting scheme compare to the
second order exponential integrator. To that end, we have conducted
numerical experiments for the KP~I and KP II equations using the
Schwartzian initial value (as is done in \citep{klein2011}, for example)
given by
\begin{equation}
u(0,x,y)=-\tfrac{1}{2}\partial_{x}\text{sech}^{2}\left(\sqrt{x^{2}+y^{2}}\right).\label{eq:schwartzian-iv}
\end{equation}
 The results are shown in Figure \ref{fig:KPI-schwartzian} (KP I
equation) and Figure \ref{fig:KPII-schwartzian} (KP II equation).
We observe that the Strang splitting scheme yields an error that is
smaller by a factor of $10$ for the KP I equation and smaller by
a factor of $3$ for the KP II equation. These gains can be realized
by only performing three fixed-point iterations. The increase in accuracy
together with the very competitive run-time leads to the conclusion
that splitting methods can in fact be very competitive in the setting
considered.

\begin{figure}
\begin{centering}
\includegraphics[width=10cm]{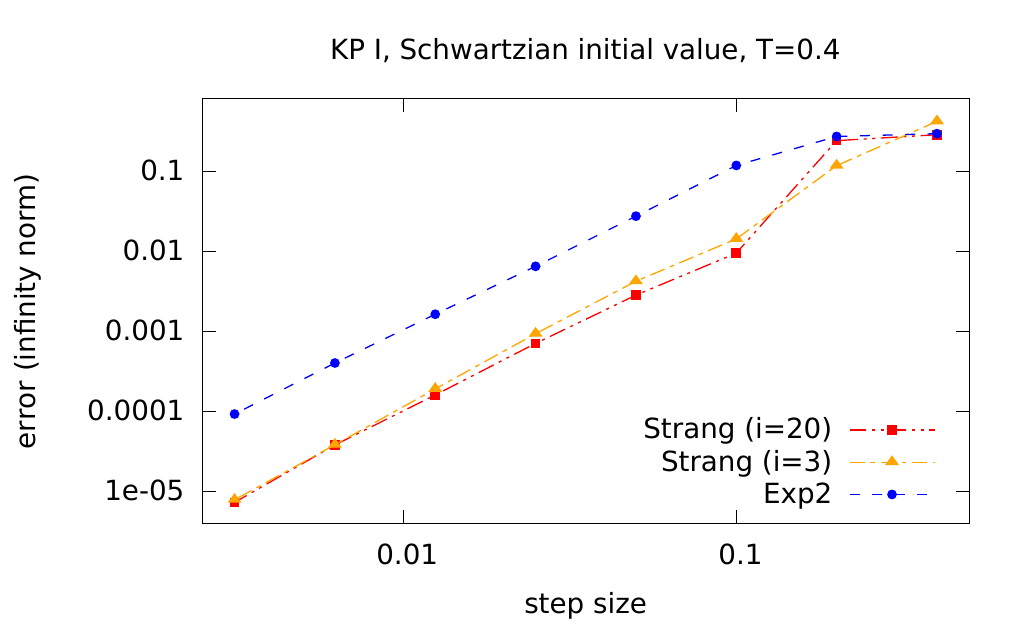}
\par\end{centering}

\protect\caption{The error (in the infinity norm) as a function of the step size is
shown at time $t=0.4$ for the KP I equation using the Schwartzian
initial value (\ref{eq:schwartzian-iv}). The parameter $\epsilon$
is chosen equal to $0.1$. To discretize space we have employed $2^{11}$
grid points in the $x$-direction and $2^{9}$ grid points in the
$y$-direction (on a domain of size $[-5\pi,5\pi]\times[-5\pi,5\pi]$).
The number of iterations conducted to solve Burgers' equation for
the Strang splitting scheme is denoted by $i$ and the exponential
integrator (\ref{eq:expint-order2}) of order two is referred to as
Exp2. \textcolor{black}{The error is computed using a reference solution with step size equal to $10^{-3}$.}\label{fig:KPI-schwartzian}}
\end{figure}

\begin{figure}
\begin{centering}
\includegraphics[width=10cm]{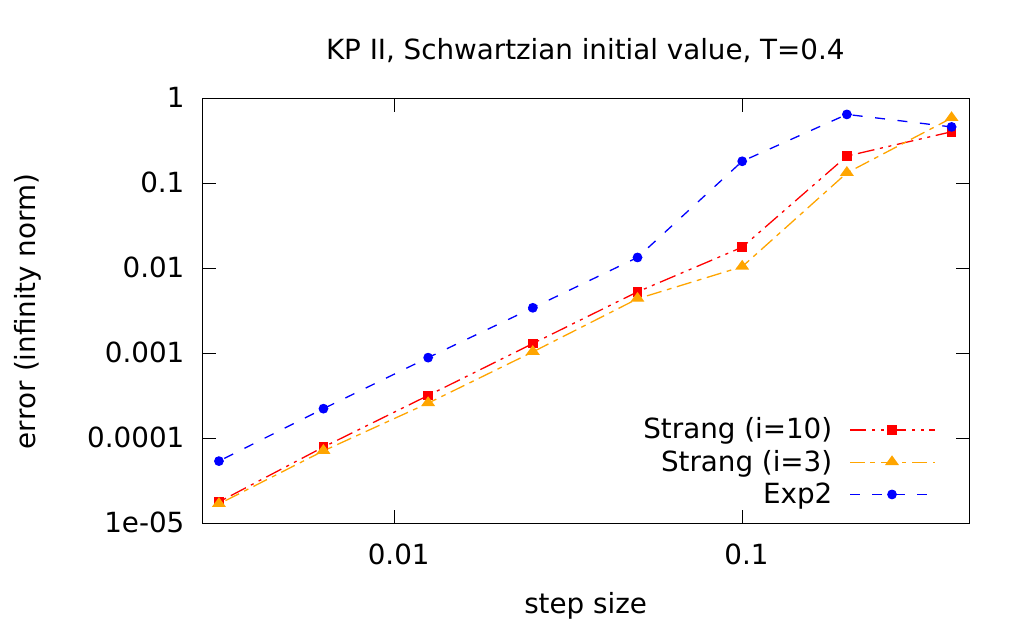}
\par\end{centering}

\protect\caption{The error (in the infinity norm) as a function of the step size is
shown at time $t=0.4$ for the KP II equation using the Schwartzian
initial value (\ref{eq:schwartzian-iv}). The parameter $\epsilon$
is chosen equal to $0.1$. To discretize space we have employed $2^{11}$
grid points in the $x$-direction and $2^{9}$ grid points in the
$y$-direction (on a domain of size $[-5\pi,5\pi]\times[-5\pi,5\pi]$).
The number of iterations conducted to solve Burgers' equation for
the Strang splitting scheme is denoted by $i$ and the exponential
integrator (\ref{eq:expint-order2}) of order two is referred to as
Exp2. \textcolor{black}{The error is computed using a reference solution with step size equal to $10^{-3}$.} \label{fig:KPII-schwartzian}}
\end{figure}

The results of the relative performance between the Strang splitting
method and the exponential integrator of order two can be understood
by considering the relative strength of the dispersive term $\epsilon u_{xxx}$
and the Burgers' nonlinearity $6uu_{x}$. For the Schwartzian initial
value the Burgers' nonlinearity is larger in magnitude by approximately
a factor of $10$. As time evolves dispersive effects eventually take
over. This happens more slowly in the case of the KP I equation than
for the KP II equation, which in turn explains the larger gain in
accuracy achieved by the splitting approach in the former case (it
is expected that splitting methods provide increased relative performance,
compared to exponential integrators, as the importance of the Burgers'
nonlinearity increases).

Let us further note that, as stated in \citep{klein2011}, the analysis
conducted above is strictly speaking only correct if the Fourier multipliers
can be precomputed. This holds true for a constant step size integrator
but not if adaptive step size control is employed. In fact, recomputing
the Fourier multipliers (due to the complex exponential) is by at
least a factor of $5$ more costly than performing the forward and
backward FFT. In the Strang splitting scheme this only affects a single
exponential while in the exponential integrator of order two the Fourier
multiplier for two additional $\varphi$ functions have to be recomputed.

\textcolor{black}{To conclude this section let us note that in performing the
splitting algorithm spectral convergence is lost. This is due to the fact
that we employ a polynomial interpolation in solving the Burgers' equation
which is only of order four. For a fixed time step size the error in space
is shown in Figure \ref{fig:KPI-space}.}

\begin{figure}
	\textcolor{black}{
	\centering
	\includegraphics[width=10cm]{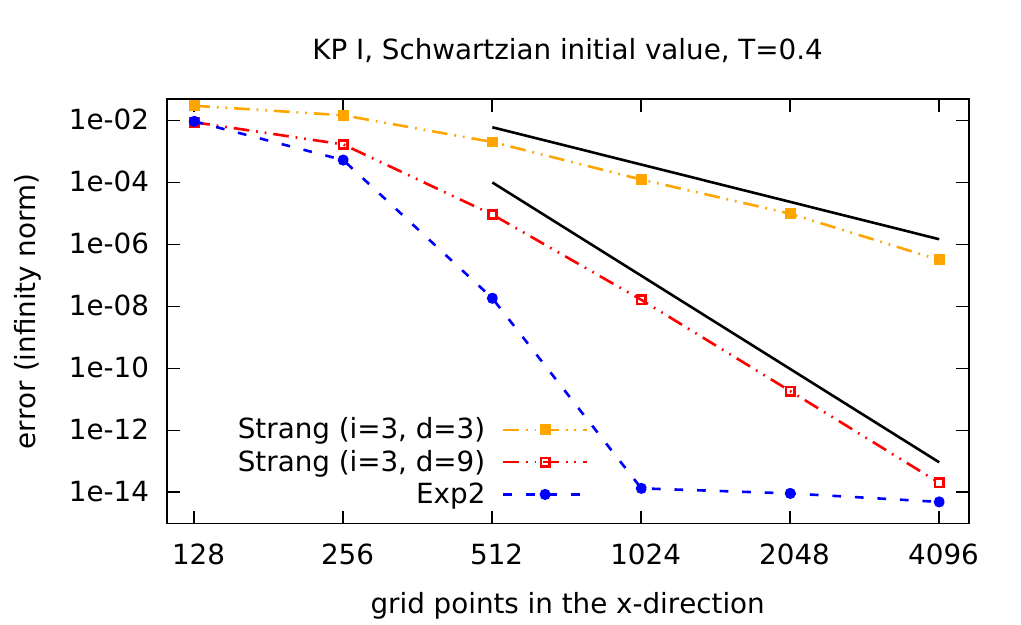}
	\caption{The error (in the infinity norm) as a function of the number of grid points in the $x$-direction is shown at time $t=0.4$ for the KP I equation using the Schwartzian
initial value (\ref{eq:schwartzian-iv}). The parameter $\epsilon$
is chosen equal to $0.1$. We employ $2^{9}$ grid point in the $y$-direction and a fixed
time step size that is equal to $\tau=10^{-2}$. The domain under consideration is of size $[-5\pi,5\pi]\times[-5\pi,5\pi]$. The number of iterations conducted to solve Burgers' equation for
the Strang splitting scheme is denoted by $i$ and the degree of the polynomial interpolation is denoted by $d$. The exponential integrator (\ref{eq:expint-order2}) of order two is referred to as
Exp2. The error is computed using a reference solution with $2^{13}$ grid points in the $x$-direction and a line of slope $4$ and $10$ is shown for comparison. \label{fig:KPI-space}}}
\end{figure}

\textcolor{black}{
	Let us note that while spectral convergence in space is certainly a desirable property, it has to be considered in the context of the time discretization error. Certainly there is no point in using a space discretization that is exact up to machine precision while making a time discretization error on the order of $10^{-2}$. If a high accuracy in space is required polynomials of higher degree can be used. For example, the ninth degree polynomial interpolation shown in Figure \ref{fig:KPI-space} is approximately three times as costly as the cubic interpolation. Nevertheless, for $i=3$ the Strang splitting scheme is still almost twice as fast compared to the exponential integrator of order two (for an equal number of grid points).
}

%\FloatBarrier
\section{Conservation properties\label{sec:Conservation-properties}}

In addition to using a scheme of sufficient accuracy at minimal computational
cost, it is often desirable to employ a method that conserves certain
invariants of the continuous problem (in this case the KP equation).
This both ensures a physically consistent solution and usually facilitates
the long time integration. It has long been known that an infinite
number of quantities is conserved by the KP equation \citep{lin1982}.
\textcolor{black}{Note, however, that most of the high order invariants
are only formal. That is, they are not well defined on suitable function
spaces (see, e.g.~\cite{molinet2007}).}

\textcolor{black}{In this paper only linear and quadratic invariants are considered that have a clear physical interpretation}. Following \citep{minzoni1996} these are the linear
invariant $m(t)$ (corresponding to mass)
\[
m(t)=\int_{\Omega}u(t,x,y)\,\mathrm{d}(x,y)
\]
and the quadratic invariant $M(t)$ (corresponding to momentum)
\[
M(t)=\int_{\Omega}u(t,x,y)^{2}\,\mathrm{d}(x,y).
\]
In addition, for the KP equation the constraint 
\begin{equation} \label{eq:constraint}
	\textcolor{black}{ \int_{-\infty}^{\infty}\partial_{yy}u(t,x,y)\,\mathrm{d}x=0 }
\end{equation}
is satisfied. This property, however, is respected for both the Strang
splitting scheme as well as the exponential integrator up to machine
precision \textcolor{black}{(a consequence of the regularization)}.

Since Runge--Kutta methods preserve linear invariants (such as the
mass in the KP equation) we might expect that the same holds true
for exponential Runge--Kutta methods (all of the exponential integrators
considered in this paper are in fact exponential Runge--Kutta methods).
A more formal definition (see \citep{hochbruck2010}) is given in
Definition \ref{def:exponential-Runge--Kutta}.
\begin{defn}
\label{def:exponential-Runge--Kutta}A exponential Runge--Kutta method
applied to $u^{\prime}=Au+B(t,u)$, $u(0)=u_{0}$ is given by
\begin{eqnarray*}
u_{1} & = & \e^{\tau A}u_{0}+\tau\sum_{i=1}^{s}b_{i}(\tau A)G_{i},\\
U_{i} & = & \e^{c_{i}\tau A}u_{0}+\tau\sum_{j=1}^{s}a_{ij}(\tau A)G_{j},\\
G_{j} & = & B(c_{j}\tau,U_{j}),
\end{eqnarray*}
where $u_{1}$ is an approximation to $u(\tau)$. The method is said
to have $s\in\mathbb{N}$ stages and is uniquely determined by the
coefficients $c_{i}\in\mathbb{R}$ and the coefficient functions $b_{i}$
and $a_{ij}$, where $i,j\in\left\{ 1,\dots,s\right\} $. The functions
$b_{i}$ are assumed to be linear combinations of $\varphi_{k}$ functions.
\end{defn}
 In contrast to Runge--Kutta methods, we have to assume that the
linear invariant under consideration is conserved for both the flow
generated by $A$ and the flow generated by $B$. This assumption
is satisfied for the KP equation. In the following theorem we assume
that $A$ and $B$ already have been discretized in space in such
a way that the linear invariant considered is a conserved quantity
of the discretized system. It should, however, be duly noted that
the proof of Theorem \ref{thm:conservation-expRK} can just as well
be carried out for the case where space is left continuous.
\begin{thm}
\label{thm:conservation-expRK}An exponential Runge--Kutta method
preserves every linear invariant that is preserved by both the flow
generated by $A$ and the flow generated by $B$.\end{thm}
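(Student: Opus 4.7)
The plan is to exploit two facts: that linear invariance under the flow of $A$ forces $\ell$ to annihilate all positive powers of $A$, and that linear invariance under the flow of $B$ forces $\ell(B(t,u))=0$ pointwise.

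First I would fix a linear functional $\ell$ (the invariant) such that $\ell(\e^{tA} v)=\ell(v)$ for all $v$ and $t$, and $\frac{d}{dt}\ell(u(t))=0$ along any solution of $u'=B(t,u)$. Differentiating the first identity at $t=0$ gives $\ell(Av)=0$, and iterating yields $\ell(A^n v)=0$ for every $n\ge1$. Consequently, for any entire function $f$ with power series $f(z)=\sum f_n z^n$, we have $\ell(f(\tau A)v)=f_0\,\ell(v)=f(0)\ell(v)$. In particular $\ell(\varphi_k(\tau A)v)=\varphi_k(0)\ell(v)=\frac{1}{k!}\ell(v)$, and since by assumption each $b_i$ is a linear combination of $\varphi_k$ functions, $\ell(b_i(\tau A)v)=b_i(0)\ell(v)$. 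The same observation applies to the $a_{ij}$ (assuming them of the same type, as is standard).

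Next I would use the $B$-invariance: for any $(t_0,u_0)$, solving $u'=B(t,u)$ with $u(t_0)=u_0$ and differentiating $t\mapsto\ell(u(t))$ at $t=t_0$ gives $\ell(B(t_0,u_0))=0$. Hence $\ell(G_j)=\ell(B(c_j\tau,U_j))=0$ for every stage, regardless of what the intermediate stages $U_j$ happen to be. Applying $\ell$ to the update formula in Definition \ref{def:exponential-Runge--Kutta} and invoking linearity gives
\[
\ell(u_1)=\ell(\e^{\tau A}u_0)+\tau\sum_{i=1}^{s}\ell\bigl(b_i(\tau A)G_i\bigr)=\ell(u_0)+\tau\sum_{i=1}^{s}b_i(0)\,\ell(G_i)=\ell(u_0),
\]
which is the claim.

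No step seems to present a real obstacle; the only subtlety is the interchange between $\ell$ and $b_i(\tau A)$, which is why I would spell out the power-series argument explicitly (or equivalently invoke the integral representation $\varphi_k(\tau A)v=\tau^{-k}\int_0^\tau \e^{(\tau-s)A}v\,\frac{s^{k-1}}{(k-1)!}\,ds$ for $k\ge1$ and pull $\ell$ inside the integral). In the spatially continuous setting this interchange needs $\ell$ to be continuous on the ambient function space, but as the theorem is stated for the spatially discretized system $\ell$ is just a vector in $\mathbb{R}^N$ and there is nothing to check.
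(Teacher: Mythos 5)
Your proof is correct, and it follows the same overall skeleton as the paper's: apply the linear functional to the update formula, kill the $\e^{\tau A}u_0$ term by $A$-flow invariance, and kill each correction term $\tau b_i(\tau A)G_i$ using the fact that $\ell(G_i)=\ell(B(c_i\tau,U_i))=0$, which follows from $B$-flow invariance irrespective of the stage values $U_i$ (so your parenthetical worry about the $a_{ij}$ is indeed unnecessary). The one place where you genuinely diverge is the technical lemma used to commute $\ell$ past $b_i(\tau A)$. The paper shows that $v(\tau)=\tau^k\varphi_k(\tau A)g$ solves $v'=Av+\tau^{k-1}g/k!$ with $v(0)=0$, and then concludes $d^{\mathrm{T}}v\equiv 0$ from $d^{\mathrm{T}}Aw=0$ and $d^{\mathrm{T}}g=0$; this only yields the vanishing of $d^{\mathrm{T}}\varphi_k(\tau A)g$ for $g$ already annihilated by $d$. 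You instead iterate $\ell(Av)=0$ to get $\ell(A^nv)=0$ for $n\ge 1$ and read off $\ell(f(\tau A)v)=f(0)\,\ell(v)$ from the power series, which is more elementary (no induction on the $\varphi$-recurrence, no ODE) and slightly stronger, since it identifies $\ell(b_i(\tau A)v)=b_i(0)\ell(v)$ for \emph{arbitrary} $v$. In the matrix setting of the theorem both routes are unimpeachable; in a spatially continuous setting your series manipulation would require $\ell$ to be continuous and $A$ to generate the semigroup in a sense compatible with the series (for the regularized KP operator this is a genuine issue, which is why the paper restricts the formal statement to the discretized system and why its integral/ODE formulation is somewhat more robust there).
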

\begin{proof}
Since the $b_{i}$ are linear combination of $\varphi_{k}$ functions
(which, in general, will be evaluated for different step sizes), in
order to show that $d$ is an invariant of the numerical method we
have to show that
\[
d^{\mathrm{T}}u_{1}=d^{\mathrm{T}}u_{0}.
\]
For the exponential Runge--Kutta method we have
\[
d^{\mathrm{T}}u_{1}=d^{\mathrm{T}}u_{0}+\tau\sum_{i=1}^{s}d^{\mathrm{T}}b_{i}(hA)G_{i},
\]
since $d^{\mathrm{T}}\e^{\tau A}u_{0}=d^{\mathrm{T}}u_{0}$.

Now, we will show that $v(\tau)=\tau^{k}\varphi_{k}(\tau A)g$ satisfies
\begin{equation}
v^{\prime}(\tau)=Av(\tau)+\frac{\tau^{k-1}}{k!}g,\qquad v(0)=0.\label{eq:phifunction-diffequation}
\end{equation}
 Let us recall the recurrence relation for the $\varphi_{k}$ function
\[
\varphi_{k+1}(\tau A)g=(\tau A)^{-1}\left(\varphi_{k}(\tau A)-\varphi_{k}(0)\right)g
\]
for which upon multiplication by $\tau^{k+1}$ and differentiating
with respect to time we get
\begin{align*}
\partial_{\tau}\tau^{k+1}\varphi_{k+1}(\tau A)g & =A^{-1}\partial_{\tau}\left(\tau^{k}\varphi_{k}(\tau A)-\tau^{k}\varphi_{k}(0)\right)g\\
 & =\tau^{k}\varphi_{k}(\tau A)g+\tau^{k-1}A^{-1}\left(\frac{1}{(k-1)!}-k\varphi_{k}(0)\right)g\\
 & =A(\tau A)^{-1}\left(\tau^{k+1}\varphi_{k}(\tau A)-\tau^{k+1}\varphi_{k}(0)\right)g+\frac{\tau^{k}}{k!}g\\
 & =A\left(\tau^{k+1}\varphi_{k+1}(\tau A)g\right)+\frac{\tau^{k}}{k!}g.
\end{align*}
A simple calculation in the case for $\varphi_{1}$ completes the
induction.

Since we can assume that both $d^{\mathrm{T}}Aw=0$ and $d^{\mathrm{T}}g=0$
for any $w$ and $g$, we immediately follow from equation (\ref{eq:phifunction-diffequation})
that
\[
d^{\mathrm{T}}v(t)=d^{\mathrm{T}}v(0)=0
\]
which implies that
\[
d^{\mathrm{T}}\tau b_{i}(\tau A)G_{i}=0.
\]
This completes the proof.
\end{proof}
Before continuing let us note that all the methods considered in this
paper or in \citep{klein2011} satisfy the assumption on the coefficient
functions $b_{i}$ given in Definition~\ref{def:exponential-Runge--Kutta}.
Furthermore, since we employ a FFT based discretization in space,
which conserves the mass exactly, we expect that the exponential integrator
considered here do in fact conserve the mass (up to machine precision).

Now, let us numerically investigate the conservation of mass. To that
end we perform simulations of the KP I and the KP II equation using
the Schwartzian initial value up to the final time $t=2$. A slice
of the solution (for $y=0$) is shown in Figure \ref{fig:KPI2} (for
the KP I equation) and in Figure \ref{fig:KPII2} (for the KP II equation).
\begin{figure}
\begin{centering}
\includegraphics[width=10cm]{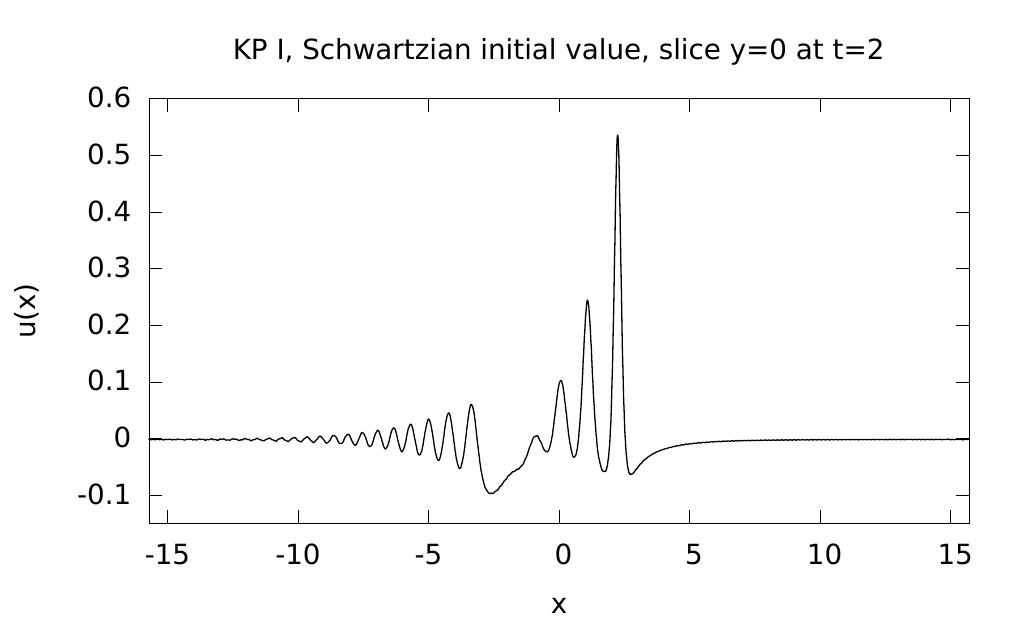}
\par\end{centering}

\protect\caption{A one-dimensional slice (at $y=0$) of the numerical solution of the
KP I equation for the Schwartzian initial value (\ref{eq:schwartzian-iv})
at $t=2$ is shown. To discretize space $2^{11}$ grid points are
employed in the $x$-direction and $2^{9}$ grid points are employed
in the $y$-direction (on a domain of size $[-5\pi,5\pi]\times[-5\pi,5\pi]$).
\label{fig:KPI2}}
\end{figure}
\begin{figure}
\begin{centering}
\includegraphics[width=10cm]{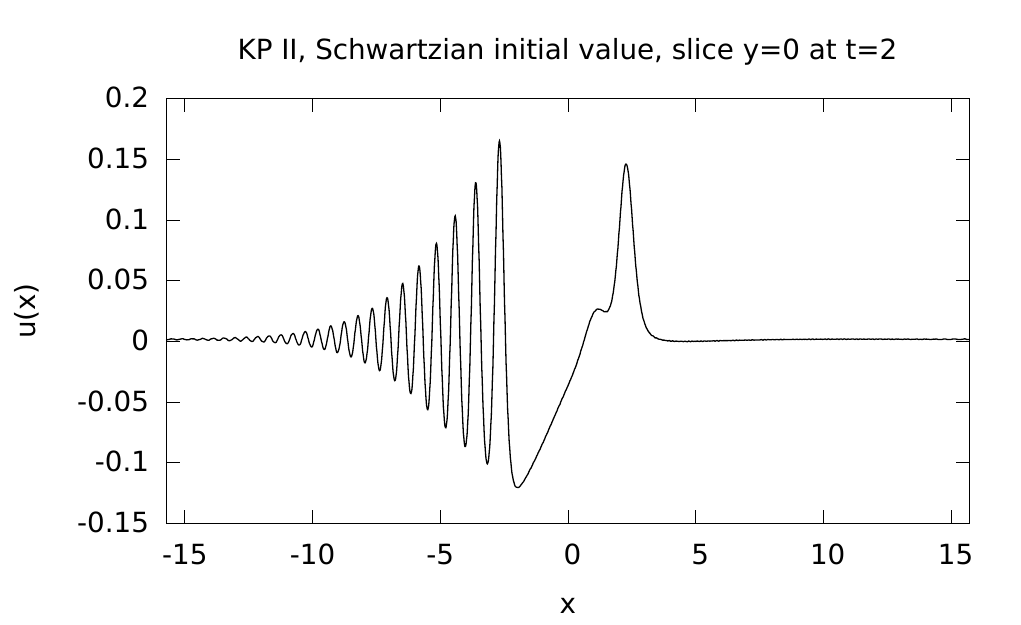}
\par\end{centering}

\protect\caption{A one-dimensional slice (at $y=0$) of the numerical solution of the
KP II equation for the Schwartzian initial value (\ref{eq:schwartzian-iv})
at $t=2$ is shown. To discretize space $2^{11}$ grid points are
employed in the $x$-direction and $2^{9}$ grid points are employed
in the $y$-direction (on a domain of size $[-5\pi,5\pi]\times[-5\pi,5\pi]$).\label{fig:KPII2}}
\end{figure}

The error in the mass, that is $\vert m(t)-m(0)\vert$, is shown as
a function of time in Figure \ref{fig:massKPI} (for the KP I equation)
and in Figure \ref{fig:massKPII} (for the KP II equation). We observe,
as expected from the theoretical result, that the exponential integrator
conserves the mass up to machine precision, while for the Strang splitting
scheme \textcolor{black}{the error ranges from $10^{-6}$ to $10^{-10}$ depending on the number of iterations $i$ performed and the number of grid points used.} From the perspective
of the splitting approach this behavior seems to be disappointing
and perhaps contrary to intuition. However, it is entirely expected
since by using the \textcolor{black}{cubic polynomial} interpolation we no longer have exponential
convergence (as is the case for the Fourier approximation) and consequently
a projection error is made in computing a solution to Burgers' equation.
As we can see from Figure \ref{fig:massKPI} the error does depend
(weakly) on the number of fixed-point iterations conducted. Note that
if we employ a finer space discretization (and increase the number
of iterations) then the error in mass of the splitting approach does
decrease as well (see Figures \ref{fig:massKPI} and \ref{fig:massKPII}).

It is, however, not clear what the ramifications for long time integration
are. In the context of semi-Lagrangian methods this was studied in
some detail. It was found that even though the mass in such interpolation
methods is not exactly conserved, they remain remarkably stable over
long times (see, for example, \citep{filbet2003}). 

\begin{figure}
\begin{centering}
\includegraphics[width=10cm]{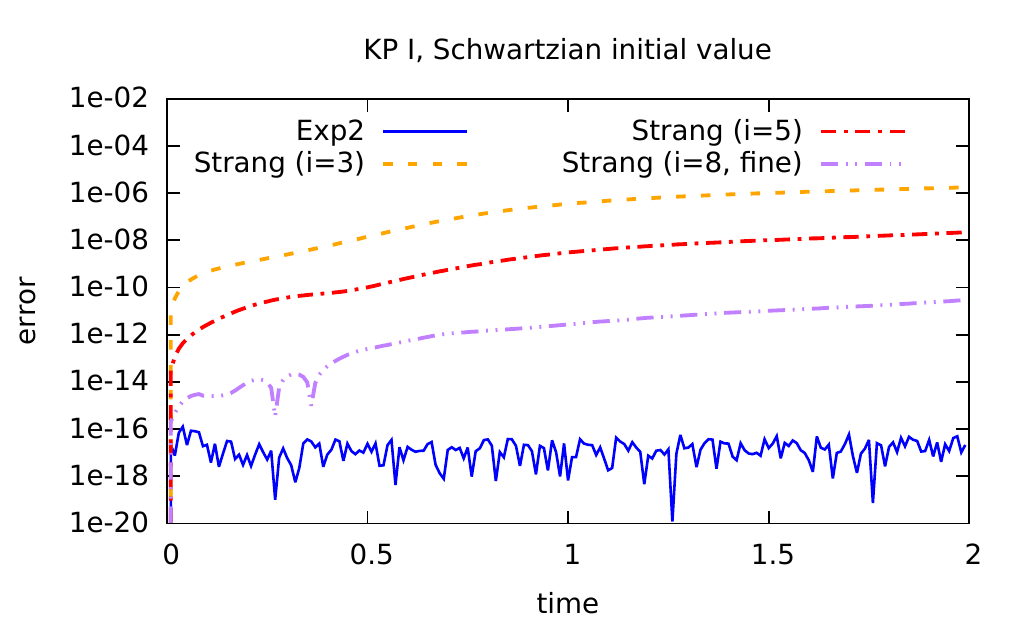}
\par\end{centering}

\protect\caption{The error in mass, i.e. $\vert m(t)-m(0)\vert$ is shown as a function
of time for the KP I equation. A time step of size $10^{-2}$ is used.
To discretize space $2^{11}$ grid points are employed in the $x$-direction
and $2^{9}$ grid points are employed in the $y$-direction (on a
domain of size $[-5\pi,5\pi]\times[-5\pi,5\pi]$), except for the
fine discretization in which case $2^{13}\times2^{9}$ grid points
are used. \label{fig:massKPI}}
\end{figure}
\begin{figure}
\begin{centering}
\includegraphics[width=10cm]{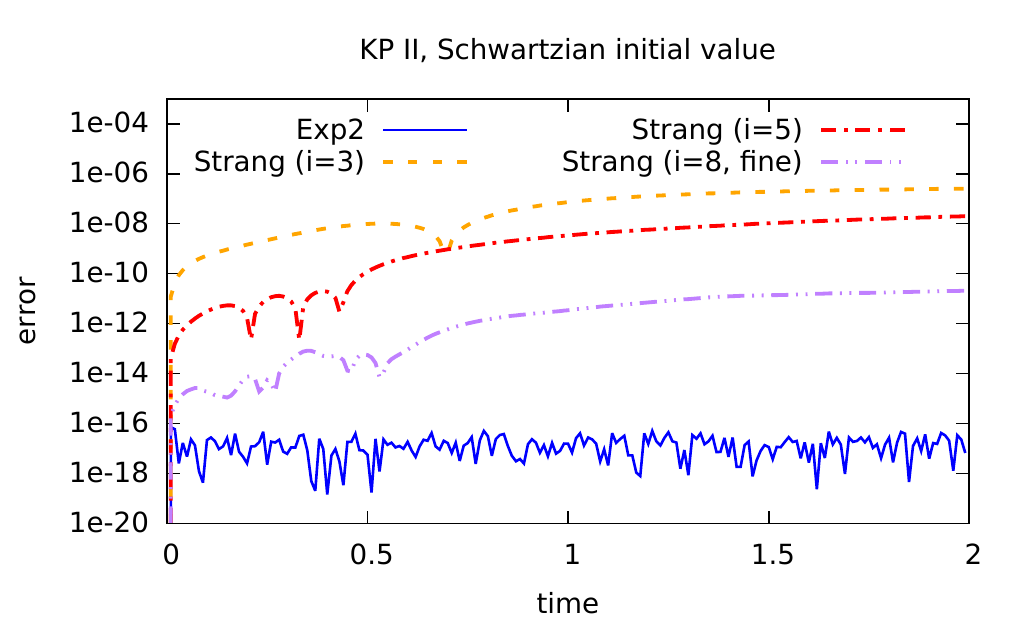}
\par\end{centering}

\protect\caption{The error in mass, i.e. $\vert m(t)-m(0)\vert$ is shown as a function
of time for the KP II equation. A time step of size $10^{-2}$ is
used. To discretize space $2^{11}$ grid points are employed in the
$x$-direction and $2^{9}$ grid points are employed in the $y$-direction
(on a domain of size $[-5\pi,5\pi]\times[-5\pi,5\pi]$), except for
the fine discretization in which case $2^{13}\times2^{9}$ grid points
are used. \label{fig:massKPII}}
\end{figure}

Now, let us consider the conservation of momentum. The results of
the numerical solutions are shown in Figure \ref{fig:energyKPI} (for
the KP I equation) and Figure \ref{fig:energyKPII} (for the KP II
equation). In the former case we observe that for $2^{11}\times2^{9}$
grid points the Strang splitting scheme is more accurate by two orders
of magnitude, while in the latter case only a difference of one order
of magnitude in accuracy is observed. Contrary to the exponential
integrator, where the time step size has a significant impact on the
conserved quantities, we observe a decrease in the error in \textcolor{black}{mass}
as the number of grid points is increased. To obtain these results,
we also have to slightly increase the number of iterations. The additional
iterations performed, as compared to the order plots presented in
the last section, do not appreciably \textcolor{black}{decrease} the error in \textcolor{black}{mass} but do result in better conservation properties (if a sufficiently
fine space discretization is used).

\begin{figure}
\begin{centering}
\includegraphics[width=10cm]{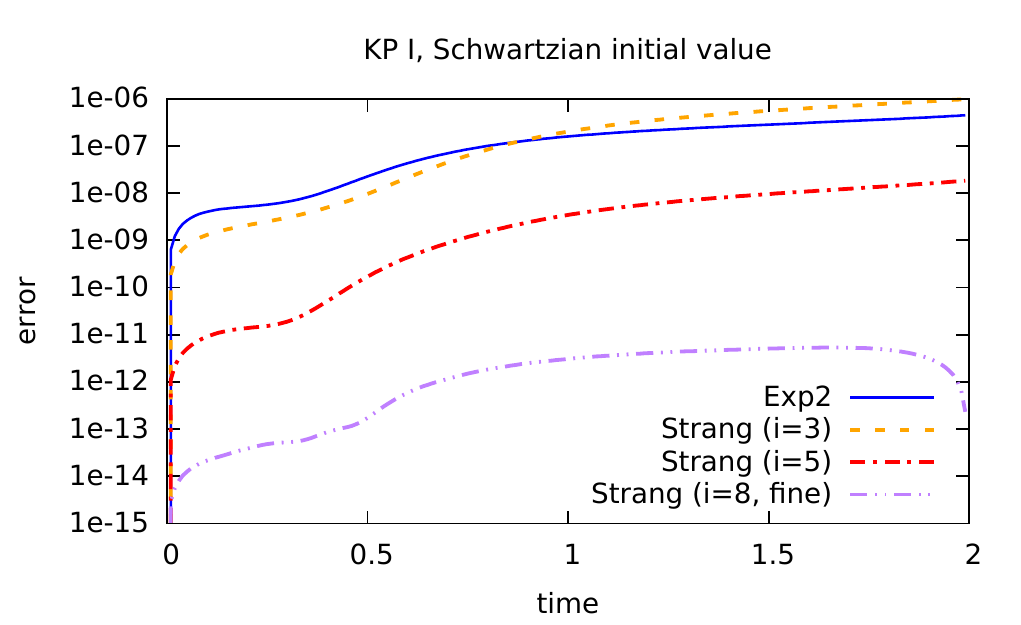}
\par\end{centering}

\protect\caption{The error in momentum, i.e. $\vert M(t)-M(0)\vert$ is shown as a
function of time for the KP I equation. A time step of size $10^{-2}$
is used. To discretize space $2^{11}$ grid points are employed in
the $x$-direction and $2^{9}$ grid points are employed in the $y$-direction
(on a domain of size $[-5\pi,5\pi]\times[-5\pi,5\pi]$), except for
the fine discretization in which case $2^{13}\times2^{9}$ grid points
are used. \label{fig:energyKPI}}
\end{figure}
\begin{figure}
\begin{centering}
\includegraphics[width=10cm]{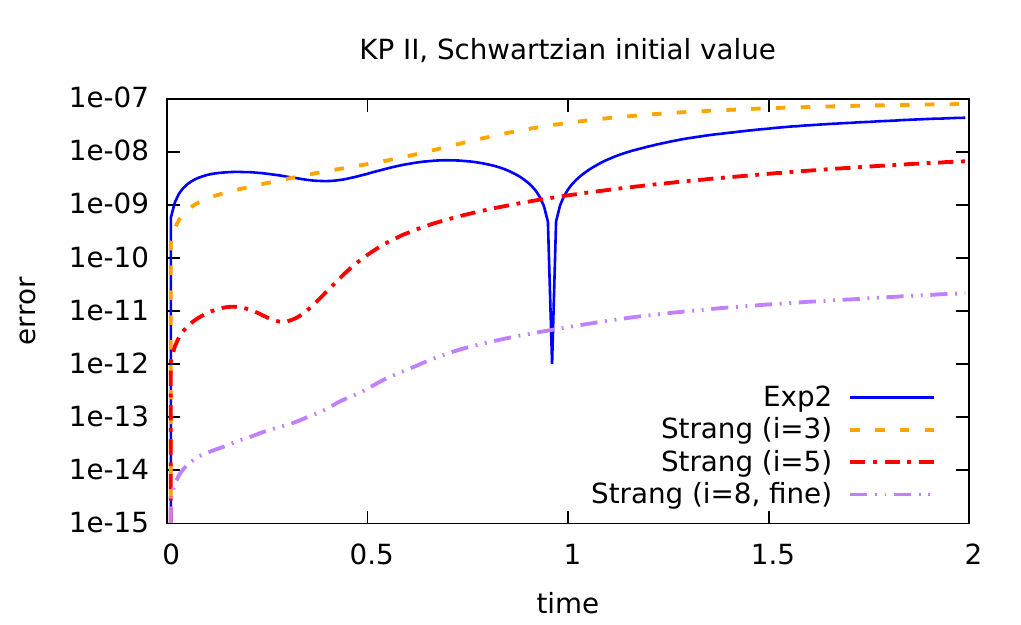}
\par\end{centering}

\protect\caption{The error in the momentum, i.e. $\vert M(t)-M(0)\vert$ is shown as
a function of time for the KP~II equation. A time step of size $10^{-2}$
is used. To discretize space $2^{11}$ grid points are employed in
the $x$-direction and $2^{9}$ grid points are employed in the $y$-direction
(on a domain of size $[-5\pi,5\pi]\times[-5\pi,5\pi]$), except for
the fine discretization in which case $2^{13}\times2^{9}$ grid points
are used. \label{fig:energyKPII}}
\end{figure}

%\FloatBarrier
\section{High order splitting\label{sec:High-order-splittings}}

An mathematical rigorous result (see, for example, \citep{blanes2005}) shows
that if real time steps are assumed and if the order of a splitting
method is strictly larger than two, both partial flows have to be
computed for a step size that is smaller than zero (i.e., we have
to conduct steps backward in time). An alternative, see \citep{hansen2009},
is to use complex time steps (with positive real part). The former
can be used in purely hyperbolic partial differential equations to
obtain methods of arbitrary order, while the latter can be used in
purely parabolic partial differentials equations (with some performance
penalty due to the necessity of using complex arithmetics) to obtain
methods of high order. 

{\color{black}
The KP equation is hyperbolic and its eigenvalues are purely imaginary. Therefore, in principle, employing a splitting approach with negative time steps is a possibility. However, due to the regularization introduced in section \ref{sec:intro} taking negative time steps results in an exponential amplification of round-off errors.

To explain this behavior let us consider the initial value problem
\[ u_t(t,x,y) = \partial_x^{-1}u_{yy}(t,x,y)\]
which, after regularization, yields the following equations in Fourier space 
\[ \hat{u}_t(t,k_x,k_y) = \frac{-i}{k_x+i\lambda \delta} k_y^2 \hat{u} = \frac{-\lambda \delta  - i k_x}{k_x^2+(\lambda \delta)^2} k_y^2 \hat{u}(t,k_x,k_y). \]
Since these equations are decoupled in the wavevectors $(k_x,k_y)$ we can consider $k_x=0$ for an arbitrary wavevector $k_y\neq0$. The solution is then given by
\begin{equation} \label{eq:ampl}
	\hat{u}_t(t,0,k_y) = \e^{-t k_y^2/(\lambda \delta)}\hat{u}(0,0,k_y)
\end{equation}
which for $t<0$ implies that any non-zero value in any of these components is exponentially amplified which is clearly a undesirable behavior for any implementation in finite precision. 

If the initial value does not satisfy the constraint given in equation \eqref{eq:constraint} this regularization exponentially damps the modes with $k_x=0$. This is in fact the expected behavior of the underlying continuous model. However, even if the constraint is satisfied by the initial value numerical errors (such as those made in the approximation of Burgers' equation) result in a non-zero value for these modes. Such a behavior, however, is unphysical and will eventually pollute the numerical solution. Thus, we need some mechanism to dissipate these  modes if high order splitting schemes are to be applied.

Since the factor in the exponential of equation \eqref{eq:ampl} immediately sets the modes under consideration to zero (even for very small time steps), we propose to implement this behavior implicitly in the numerical scheme. The corresponding solution is, up to machine precision, equivalent to the regularization procedure introduced in the introduction (which is used extensively in the literature). Note that the mode with $k_x=0$ and $k_y=0$ is constant in time and thus no regularization is required. At least for initial values that satisfy constraint \eqref{eq:constraint} this allows us to perform negative time steps (the value of the modes under consideration is neither changed by a positive nor by a negative time step).
}

{\color{black}
While the Strang splitting scheme is the universally employed second order splitting scheme, a variety of of different fourth order splitting schemes have been proposed. The often used triple jump scheme is relatively cheap from a computational point of view. Its implementation is only three times as expensive as an implementation of the Strang splitting scheme. Unfortunately, the triple jump schemes employs negative time steps of length $1.7 \tau$. This, implies that we require additional iterations in order to obtain good accuracy for the solution of Burgers' equation. However, even if this is done, numerical simulation suggest that the error constants of this method is very disappointing.

The length of the negative time step necessary for fourth order splitting methods can be reduced by considering additional stages. For example, the methods of Suzuki and McLachlan (see, for example, \cite{hairer2006}) require roughly five times the computational effort compared to the Strang splitting scheme. In our numerical simulations the method of McLachlan has been found to perform best for both the KP I and KP II equation. However, on drawback of the method by McLachlan is that it is somewhat expensive from a computational point of view.
}

%This fact implies that if we do
%not employ a time step with positive real and positive imaginary part,
%round-off errors will be exponentially amplified. Therefore, we are
%limited to order two (i.e., to the Strang splitting scheme discussed
%so far). Let us note that this is a purely numerical effect which
%is caused by the finite precision floating point numbers implemented
%in computer hardware. \textcolor{black}{Numerical simulations have been conducted that confirm this behavior.}

\textcolor{black}{As an alternative, we consider} the so-called Richardson extrapolation algorithm
which enables the construction of higher order methods from an (almost)
arbitrary numerical one-step method $S_{\tau}$ with step size $\tau$.
In the following we limit ourselves to the case where $S_{\tau}$
is a method of order two. We proceed by performing a step with length
$\tau$ and two steps with $\tau/2$. Then, the final approximation
$u_{n+1}$ is computed from $u_{n}$ as follows
\[
u_{n+1}=\frac{4S_{\tau/2}\left(S_{\tau/2}(u_{n})\right)-S_{\tau}(u_{n})}{3}.
\]
This procedure eliminates the leading error term in $S_{\tau}$ and
due to the symmetry of the Strang splitting scheme results in a method
which is consistent of order four. \textcolor{black}{Note that since symmetry can be defined for the semi-discrete case (i.e.~after space has already been discretized) order four is achieved independent of the space discretization under consideration.}
However, in general, the resulting scheme is not stable in the nonlinear case. In fact, we observe this
lack of stability for the KP equation. 

Therefore, we propose to apply a global extrapolation algorithm (see
\citep{verwer1985}). First, we compute
\[
v_{n+1}=S_{\tau}(v_{n})
\]
and 
\[
w_{n+1}=S_{\tau/2}\left(S_{\tau/2}(w_{n})\right),
\]
where $v_{0}$ and $w_{0}$ are equal to the initial value $u_{0}$.
Then, we compute the final approximation, for each time step $n$,
as follows:
\[
u_{n}=\frac{4w_{n}-v_{n}}{3}.
\]
Note that this is in fact the Richardson extrapolation algorithm.
But instead of applying it at each time step, we first compute a solution
with time step $\tau$ and a solution with time step $\tau/2$ and
then apply the extrapolation procedure independently for each time
step. This alleviates the stability problems as both $v_{n}$ and
$w_{n}$ are computed by the same unconditionally stable scheme (but
using a different step size). 

In \citep{klein2011} a number of exponential integrators have been
compared in the context of the KP equation. It was found that the
method of Cox and Matthews \citep{cox2002}, the method of Krogstad
\citep{krogstad2005}, and the method of Hochbruck and Ostermann \citep{hochbruck2010}
do show almost identical performance characteristics (even though
they differ in run-time as well as accuracy). We have chosen to compare
the extrapolation scheme described in this section \textcolor{black}{and the method by McLachlan} with the method
of Cox and Matthews. The results are shown in Figure \ref{fig:KPI-extrapolation}
(for the KP I equation) and Figure \ref{fig:KPII-extrapolation} (for
the KP II equation). \textcolor{black}{We observe that in case of the KP I equation
the method of McLachlan is more accurate by approximately an order of magnitude compared both to the extrapolation method and the method of Cox and Matthews. For the KP II equation the
method of Cox and Matthews performs better and is is as accurate as the method of McLachlan. Both methods are superior to the extrapolation method by approximately a factor of $7$.} Note that the last statement is only true in the asymptotic case. There is a region (up to an error of approximately $10^{-2}$)
where the extrapolation method is more accurate.

The method of Cox and Matthews requires $4$ evaluations of the nonlinearities
and the computation of $12$ matrix functions. This gives a total
of $32$ FFTs that have to be performed which compared to the second
order method increases the cost by a factor of $3.2$. This is almost
the same increase in cost by a factor of $3$ which is required for
the extrapolation method. \textcolor{black}{However, the method of McLachlan is $2.5$ times as expensive as the extrapolation method. For a fourth order method this would require a gain in accuracy of almost a factor of $40$ for the method to be competitive. Let us note, however, that the method of McLachlan shows better conservation properties compared to the extrapolation approach.} 

\begin{figure}
\centering{}\includegraphics[width=10cm]{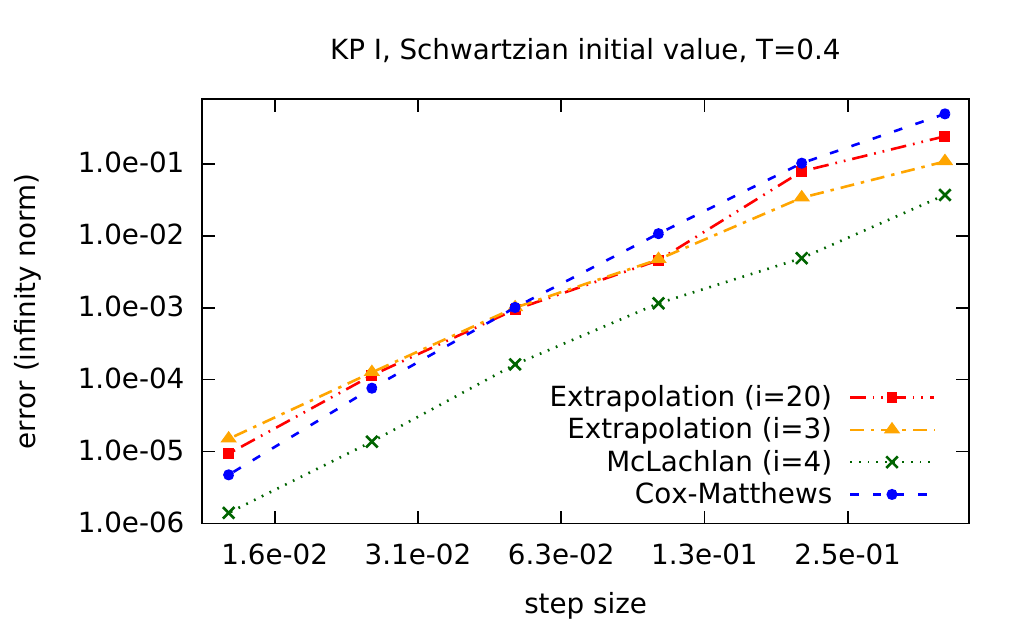}\protect\caption{The error (in the infinity norm) as a function of the step size $\tau$
is shown at time $t=0.4$ for the KP I equation using the Schwartzian
initial value (\ref{eq:schwartzian-iv}). The parameter $\epsilon$
is chosen equal to $0.1$. To discretize space we have employed $2^{11}$
grid points in the $x$-direction and $2^{9}$ grid points in the
$y$-direction (on a domain of size $[-5\pi,5\pi]\times[-5\pi,5\pi]$).
The number of iterations conducted to solve Burgers' equation for
the extrapolation scheme and McLachlan's method is denoted by $i$. We have used the exponential
integrator developed by Cox and Matthews. \textcolor{black}{The error is computed using a reference solution with step size equal to $10^{-3}$.}\label{fig:KPI-extrapolation}}
\end{figure}

\begin{figure}
\centering{}\includegraphics[width=10cm]{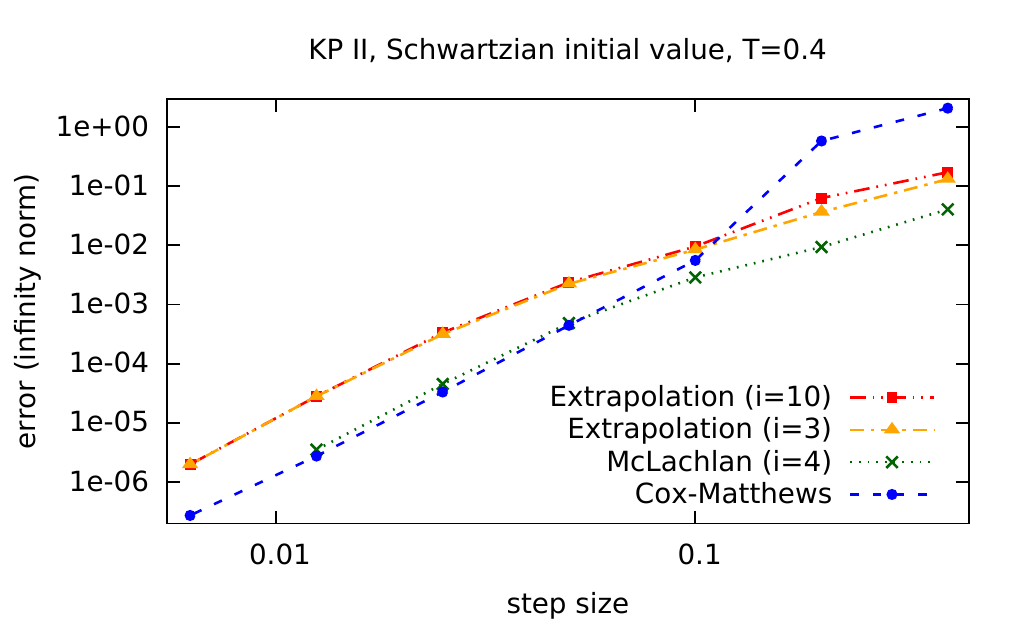}\protect\caption{The error (in the infinity norm) as a function of the step size is
shown at time $t=0.4$ for the KP II equation using the Schwartzian
initial value (\ref{eq:schwartzian-iv}). The parameter $\epsilon$
is chosen equal to $0.1$. To discretize space we have employed $2^{11}$
grid points in the $x$-direction and $2^{9}$ grid points in the
$y$-direction. The number of iterations conducted to solve Burgers'
equation for the extrapolation scheme and McLachlan's method is denoted by $i$. We have
used the exponential integrator developed by Cox and Matthews (on
a domain of size $[-5\pi,5\pi]\times[-5\pi,5\pi]$). \textcolor{black}{The error is computed using a reference solution with step size equal to $10^{-3}$.} \label{fig:KPII-extrapolation}}
\end{figure}

\textcolor{black}{Furthermore, we have analyzed the conservation properties of the fourth order methods considered in this section in case of the KP I equation. As before we integrate the equation until final time $T=2$. In this setting the method of McLachlan is the most robust scheme preserving momentum up to $10^{-10}$ for a time step size between $10^{-2}$ and $3\cdot10^{-2}$ (using the fine space discretization and $i=8$). In this case lack of conservation is only due to the error made in solving Burgers' equation. On the other hand the extrapolation scheme shows a behavior similar to the method of Cox and Matthews. While for small time steps conservation to high accuracy can be observed  (using a time step size of $\tau = 10^{-2}$ we observe an error in momentum approximately equal to $2\cdot 10^{-12}$ for the extrapolation scheme and $6\cdot 10^{-9}$ for the method of of Cox and Matthews), even if the time step size is only increased to $3 \cdot 10^{-2}$ the error in momentum increases to $2\cdot 10^{-7}$ for the extrapolation scheme and to $3\cdot 10^{-7}$ for the method of Cox and Matthews. The behavior of the extrapolation scheme is due to the fact that the error in momentum is now limited by the non-conservative nature of the extrapolation procedure.}

%\FloatBarrier
{\color{black}
\section{Performance comparison \label{sec:pc}}

In the previous sections we have only considered the error as a function of the time step size. Together with the performance considerations given in section \ref{sec:Performance-considerations}, we are able to compare the relative performance of the second order exponential integrator and the Strang splitting scheme. However, it is difficult to compare a second to a fourth order scheme. In addition, it is instructive to compare the various schemes in terms of the run time that is necessary to achieve a given accuracy in time. 

The purpose of this section is to perform the corresponding comparison. We employ the same initial values that are considered in section \ref{sec:Performance-considerations}. The numerical results consider a tolerance between $10^{-1}$ and $10^{-5}$ and are shown in Figure \ref{fig:rt-KPI} (for the KP I equation) and Figure \ref{fig:rt-KPII} (for the KP II equation). For convenience the speedup of using the Strang splitting approach (which is superior to the extrapolation scheme for the accuracy considered here) to the best exponential integrator (either the second order exponential integrator or the method of Cox and Matthews) is indicated for a tolerance of $10^{-2}$ and $10^{-3}$.

We observe that for low accuracy the Strang splitting scheme is faster by a factor of $3$ to $7$ compared to the exponential integrators. This is true for both the KP I and KP II equations. In addition, we observe that even for relatively low accuracy the method of Cox and Matthews is superior to the exponential integrator of order two considered here. On the other hand, the extrapolation method only overtakes the splitting method for a tolerance of approximately $10^{-5}$. Let us further note that the performance of the extrapolation method is always better than that of the method by Cox and Matthews (although the difference between the two methods for accuracies below $10^{-5}$ is negligible). Note that even if we employ a ninth degree polynomial interpolation (instead of the cubic interpolation considered so far) the performance of the extrapolation method and the exponential integrator of Cox and Matthews is almost equal (the extrapolation method is faster for the KP I equation and the method of Cox and Matthews is slightly faster for the KP II equation).

%For the KP I equation we observe a speedup by more than a factor of $5$ at least for an accuracy up to $10^{-5}$. Note that for tolerances above $10^{-3}$ the method by Cox and Matthews is superior to the second order exponential integrator.
%
%For the KP II equation the speedup of the Strang splitting is generally less than for the KP I equation  and the fourth-order method by Cox and Matthews is superior for tolerances below approximately $10{^-5}$. However, for large tolerances there is still a significant speedup. We also note that the extrapolation scheme shows almost exactly the same run time at a given accuracy compared to the method of Cox and Matthews for tolerances below $10^{-3}$ and is clearly superior for tolerances above $10^{-3}$.
%
\begin{figure}
	\includegraphics[width=10cm]{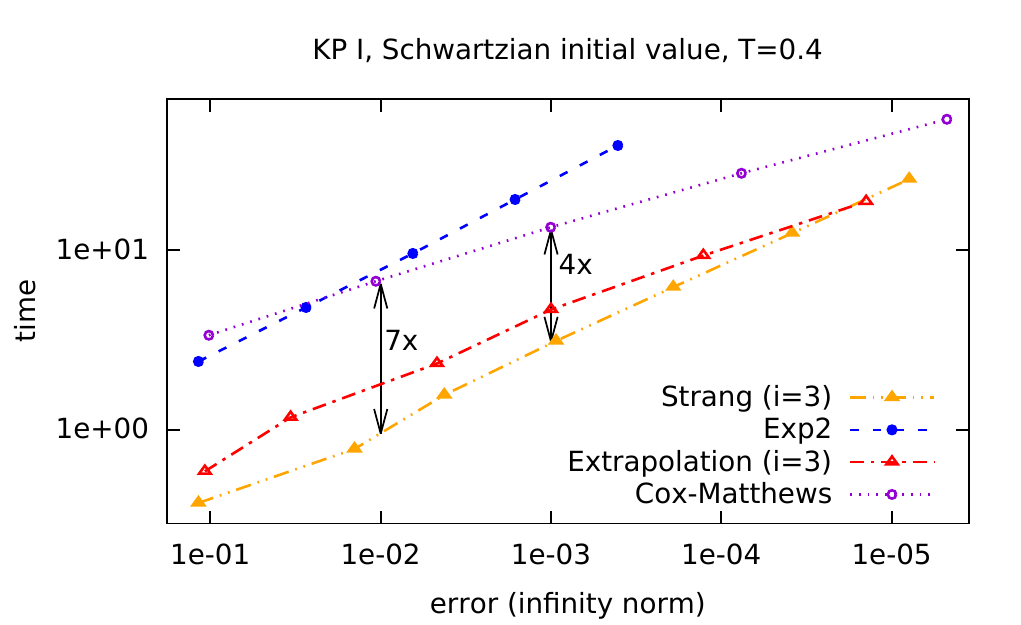}
	
	\caption{\color{black}The error (in the infinity norm) as a function of the run time is
shown at time $t=0.4$ for the KP I equation using the Schwartzian
initial value (\ref{eq:schwartzian-iv}). The parameter $\epsilon$
is chosen equal to $0.1$. To discretize space we have employed $2^{11}$
grid points in the $x$-direction and $2^{9}$ grid points in the
$y$-direction (on a domain of size $[-5\pi,5\pi]\times[-5\pi,5\pi]$).
The number of iterations conducted to solve Burgers' equation for
the Strang splitting scheme is denoted by $i$ and the exponential
integrator (\ref{eq:expint-order2}) of order two is referred to as
Exp2. The error is computed using a reference solution with step size equal to $10^{-3}$. \label{fig:rt-KPI}}
\end{figure}

\begin{figure}
	\includegraphics[width=10cm]{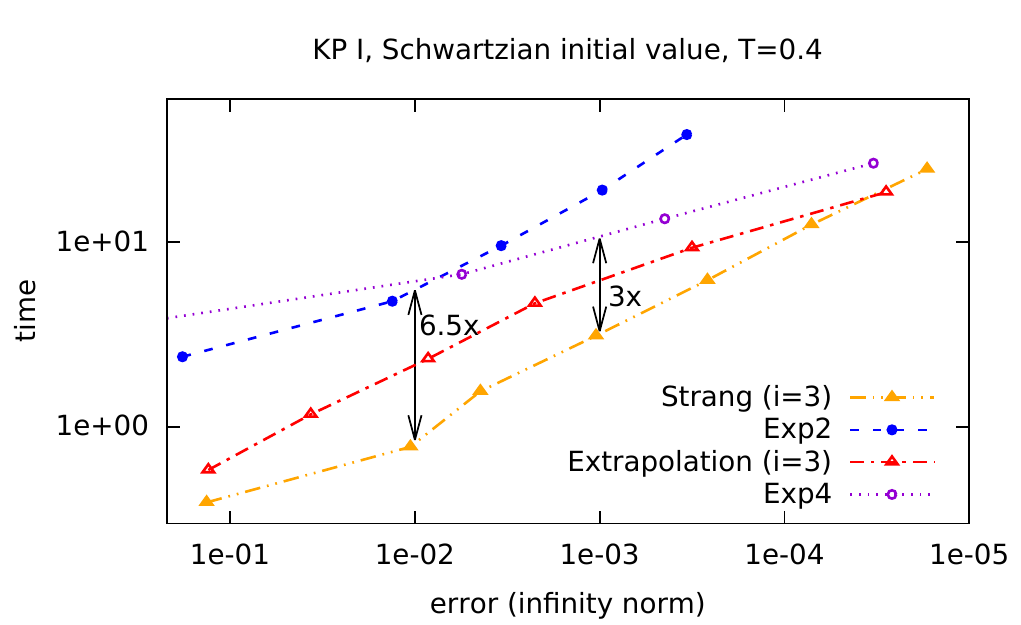}

	\caption{\color{black}The error (in the infinity norm) as a function of the run time is
shown at time $t=0.4$ for the KP II equation using the Schwartzian
initial value (\ref{eq:schwartzian-iv}). The parameter $\epsilon$
is chosen equal to $0.1$. To discretize space we have employed $2^{11}$
grid points in the $x$-direction and $2^{9}$ grid points in the
$y$-direction (on a domain of size $[-5\pi,5\pi]\times[-5\pi,5\pi]$).
The number of iterations conducted to solve Burgers' equation for
the Strang splitting scheme is denoted by $i$ and the exponential
integrator (\ref{eq:expint-order2}) of order two is referred to as
Exp2. The error is computed using a reference solution with step size equal to $10^{-3}$. \label{fig:rt-KPII}}
\end{figure}

}

%\FloatBarrier
{\color{black}
\section{Initial values that violate a constraint \label{sec:constraint}}

It is well known that the KP equation does satisfy the constraint \eqref{eq:constraint} for positive times $t>0$ even if this is not the case for the initial value (see, for example, \cite{fokas1999} and \cite{molinet2007}). This behavior is enforced by a discontinuity in time for the continuous problem and by the regularization for the discrete problem. For time integration schemes this usually results in order reduction (see, for example, \cite{klein2007}).

To investigate this phenomenon we will consider the initial value
\begin{equation} \label{eq:iv-no}
	u(0,x,y) = \alpha \mathrm{e}^{-D(x^2+y^2)}
\end{equation}
which does not satisfy the constraint and 
\begin{equation} \label{eq:iv-yes}
	u(0,x,y) = \beta x \mathrm{e}^{-D(x^2+y^2)}
\end{equation}
which satisfies the constraint. For both initial values $D=0.5$ is used. The parameters $\alpha=0.35$ and $\beta=0.6$ have been chosen such that the maximal amplitude of both initial values is comparable.

The result of our numerical experiments are shown in Figure \ref{fig:constr}. For the second order exponential integrator we observe order reduction in case of the initial value that does \textit{not} satisfy the constraint (even though this seems to be the easier problem). On the other hand, for the Strang splitting scheme the numerical results are consistent with a numerical method of order two (i.e.~no severe order reduction is present). These results (as shown in Figure \ref{fig:constr}) imply that for the evaluation of the Gaussian pulse the Strang splitting scheme is more than two orders of magnitude more accurate compared to the exponential integrator of order two (for the same step size).

\begin{figure}
	\centering
	\includegraphics[width=10cm]{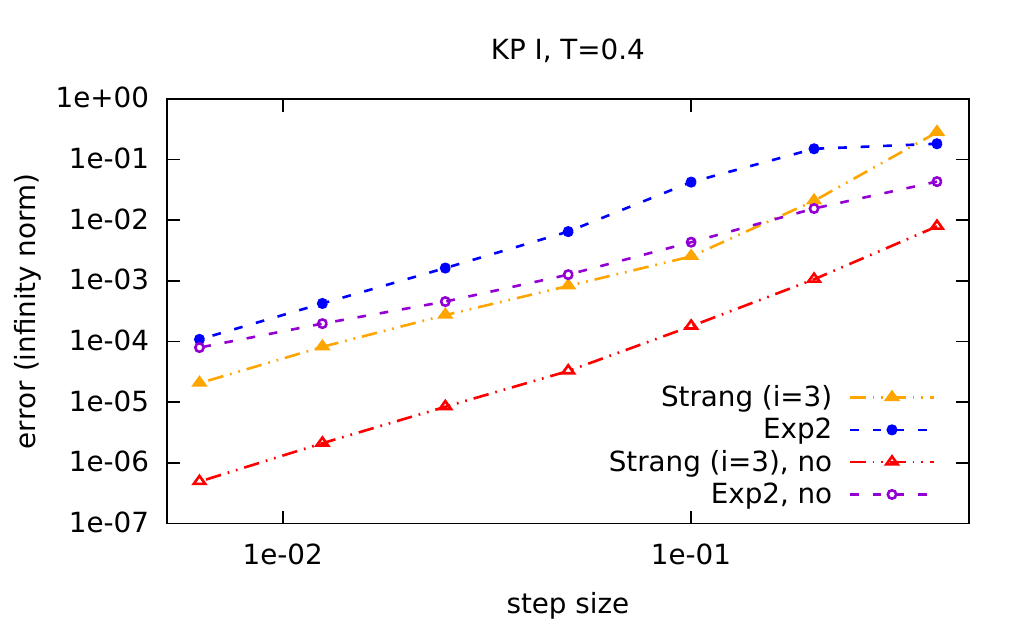}

	\caption{\color{black}The error (in the infinity norm) as a function of the step size is
shown at time $t=0.4$ for the KP I equation using the initial values
given in equation (\ref{eq:iv-no}) and (\ref{eq:iv-yes}). The numerical results for the former initial value, which does not satisfy the constraint is denoted by an additional \textit{no} in the legend of the plot. The parameter $\epsilon$ is chosen equal to $0.1$. To discretize
space we have employed $2^{11}$ grid points in the $x$-direction
and $2^{9}$ grid points in the $y$-direction (on a domain of size
$[-5\pi,5\pi]\times[-5\pi,5\pi]$). The number of iterations conducted
to solve Burgers' equation for the Strang splitting scheme is denoted
by $i$ and the exponential integrator of order two is referred to
as Exp2. The error is computed using a reference solution with step size equal to $10^{-3}$.\label{fig:constr}}

\end{figure}

}

\section{Conclusion \& Outlook\label{sec:conclusion}}

We have demonstrated that splitting methods for the KP equation can be efficiently implemented and achieve performance that \textcolor{black}{is significantly better compared to state of the art time integrators}. The efficient implementation of the projections necessary for computing an approximation to the solution of Burgers' equation, as demonstrated in \ref{sec:Efficient-cubic}, yields a Strang splitting scheme that, in addition to improved accuracy, is less expensive (as in most situations we only have to compute a few fixed-point iterations) compared to the exponential integrator
of order two. \textcolor{black}{In addition, conservation of momentum is exact for the time integrator. A challenge for such methods is the error made by the space approximation which no longer shows spectral convergence in the number of grid points. Also conservation of mass and momentum is influenced by the polynomial approximation in space. We consider this as further research.}

\textcolor{black}{We have also considered high order splitting schemes (for a modified regularization) and proposed an extrapolation method for Strang splitting. The latter method is computationally attractive and achieves comparable accuracy to the exponential integrator of Cox and Matthews for the KP I equation but worse accuracy for the KP II equation. However, the good conservation of momentum observed for the Strang splitting scheme is lost by conducting the extrapolation. Note that the fourth order method by McLachlan shows the best conservation properties among the methods considered in this paper. It is, however, more expensive computationally.}

\textcolor{black}{Let us note that,} for applications which require long time integration neither the exponential integrators (due to their conservation properties) nor the splitting approach (due to the lack of efficient high order methods) provide an ideal numerical scheme. \textcolor{black}{We consider this as further research.}

\bibliographystyle{plainnat}
\bibliography{KPequation}

\appendix

\section{Efficient implementation of cubic polynomial interpolation\label{sec:Efficient-cubic}}

The implementation shown here is based on the Lagrange form of the
interpolation polynomial through four equidistant nodes. One important
aspect of the algorithm is to determine the integer and fractional
part of the evaluation point with respect to the numerical grid. Usually
this would involve modulo operations which, however, significantly
impact performance. Therefore we only use casting to integer and replace
modulo operations by arithmetic operations, wherever possible. The
algorithm is divided into two loops; this approach yields a performance
gain of $1.5$ compared to the monolithic implementation (using the
GCC C++ compiler). The code is shown in Algorithm \ref{alg:lagrange}.

\begin{algorithm}
\begin{lstlisting}[language=C++]
// Evaluates a cubic Lagrange polynomial on the grid {-1,0,1,2}.
double lagrange3(double x,double u0m1, double u0p0, double u0p1,
					double u0p2) {
	double lm1 = -0.16666666666666667*x*(x-1.0)*(x-2.0);
	double l0  =  0.5*(x+1.0)*(x-1.0)*(x-2.0);
	double l1  = -0.5*(x+1.0)*x*(x-2.0);
	double l2  =  0.16666666666666667*(x+1.0)*x*(x-1.0);
	return u0m1*lm1 + u0p0*l0 + u0p1*l1 + u0p2*l2;
}

// u0: the initial value, u1: the result of the computation
// nx/ny: number of grid points in x/y-direction, L: domain length
// fp_it: number of fixed-point iterations conducted
void burgers(array2d& u0, array2d& u1, double h, double L, int nx,
				int ny, int fp_it=3) {

	// Some values that can be precomputed.
	double adv = 6.0*h*double(nx)/L;
	vector<double> d_i(nx);
	for(int i=0; i<nx; i++) 
		d_i[i] = double(i);
	vector<double> xred(nx);

	// Iteration over the y-direction.
	for(int j=0; j < ny; j++) {

		// The restrict keyword tells the compiler that there is no
		// pointer aliasing. This is essential for vectorization.
		double* __restrict _xred  = &xred[0];

		// The fixed-point iteration
		for (int k = 0; k < fp_it; k++) {

			// Avoid copying u0 to u1 in the first iteration.
			double* __restrict _us = (k==0) ? &u0(0,j) : &u1(0,j);

			// Determine an array of positions in [0,L].
			for(int i = 0; i < nx; i++) {
				double x = d_i[i] - adv*_us[i];
				if(x < 0)   x += nx;  // We assume that |u|<L/(6*h) holds.
				if(x >= nx) x -= nx;  // We assume that |u|<L/(6*h) holds.
				_xred[i] = x;
			}

			const double * __restrict _u0 = &u0(0,j);
			      double * __restrict _u1 = &u1(0,j);
			for (int i = 0; i < nx; i++) {
				// Determine the interpolation nodes.
				int p0 = (int)_xred[i];
				if(p0 == nx)  p0 = 0;
				int pm1 = p0-1;
				if(pm1 == -1)  pm1 = nx-1;
				int p1 = p0+1;
				if(p1 == nx)  p1 = 0;
				int p2 = p1+1;
				if(p2 == nx)  p2 = 0;

				// Evaluate the Lagrange interpolation.
				double x = _xred[i] - double(p0);
				_u1[i] = lagrange3(x, _u0[pm1], _u0[p0], _u0[p1], _u0[p2]);
			}
		}
	}
}

\end{lstlisting}

\protect\caption{Computation of the fixed-point iteration used to solve (\ref{eq:burgers-exact-representation})
(including the construction and evaluation of the Lagrange interpolation
polynomial). \label{alg:lagrange}}
\end{algorithm}

\end{document}